\let\c@author\relax
\providecommand{\vone}{\ensuremath{\vec{1}}}
\newcommand{\vol}{\mathrm{Vol}}
\newcommand{\R}{\mathbb{R}}
\newcommand{\Sbar}{\bar{S}}
\DeclareMathOperator{\Tr}{Tr}
\DeclareMathOperator{\st}{s.t.}
\DeclareMathOperator{\Diag}{Diag}
\title{A Cheeger Inequality for Size-Specific Conductance}
\author{
Yufan Huang\\
Purdue University\\
\texttt{2019hyf@gmail.com}
\and
David F.~Gleich\\
Purdue University \\
\texttt{dgleich@purdue.edu}
}
\newcommand\blfootnote[1]{%
  \begingroup
  \renewcommand\thefootnote{}\footnote{#1}%
  \addtocounter{footnote}{-1}%
  \endgroup
}
\begin{document}
\maketitle
\begin{abstract}
  The \(\mu\)-conductance measure proposed by Lov\'asz and Simonovits is
  a size-specific conductance score that identifies the set with
  smallest conductance while disregarding those sets with volume
  smaller than a \(\mu\) fraction of the whole graph. Using
  \(\mu\)-conductance enables us to study the network structures in new
  ways. In this manuscript we study a modified spectral cut for
  $\mu$-conductance that is a natural relaxation of the integer program
  of \(\mu\)-conductance and show that the optimum of this program has a
  two-sided Cheeger inequality with \(\mu\)-conductance.
  \blfootnote{The work was funded in part by NSF CCF-1909528,
    IIS-2007481, DOE DE-SC0023162, and the IARPA Agile program. We
    thank C.~Seshadhri for discussions on this topic.}
\end{abstract}

\section{Introduction}

Graph clustering is one of the most fundamental problems in graph
analysis and has many practical applications \citep{Schaeffer_2007}. A
major class of clustering methods is based on finding small cuts,
motivated by the observation that meaningful clusters tend to be
internally well-connected while being sparsely connected to the rest
of the graph. The minimum cut problem is perhaps the most classical
example, but it often produces highly unbalanced solutions, such as
isolating a single low-degree vertex \citep{Ford_1963}. To address
this issue, the sparsest cut (or low-conductance cut) objective
instead minimizes the ratio between the size of the cut,
$\sizeof{\partial S}$, and the size of the smaller side of the partition,
$|S|$ (or, in the conductance case, the smaller volume side), and has achieved notable success
\citep{Leighton_1999,Chung_1996}.

However, a limitation of the sparsest cut and low-conductance cut
objectives is that they capture only a single structure in a
graph—the set with the worst bottleneck. In reality, the relationship
between cut size $\sizeof{\partial S}$ and set size $|S|$ can be much
more complex. \citet{Leskovec_2009} conducted a systematic study of the
so-called \emph{network community profile}, which measures the minimum
conductance over a range of different set sizes, across a large
collection of real-world graphs. They observed that, in many graphs,
the minimum-conductance sets at different size scales can have
dramatically different conductance values. For example, the minimum
conductance set of size roughly half of the graph can have conductance
several orders of magnitude larger than that of the unconstrained
minimum-conductance set.

A related size-specific cut measure is $\mu$-conductance, originally
introduced by Lov\'asz and Simonovits in the study of Markov chains
for sampling convex bodies \citep{Lovasz-1990-mixing}.
$\mu$-conductance seeks the minimum-conductance set subject to a size
constraint: the volume of the set $S$, denoted $\vol(S)$, must be at
least a $\mu$ fraction of the total graph volume.  By varying $\mu$, one
can reveal richer information about the cut structure of a graph while
ignoring sets of extremely small volume.

Despite their descriptive power, such size-specific cut profiles
cannot be computed exactly, as computing minimum-conductance sets is
NP-hard \citep{sima-schaeffer06}. A common heuristic approach is to generate many graph cuts
using different low-conductance algorithms and aggregate the results
\citep{Leskovec_2009}.  This approach faces two main
challenges. First, existing approximation algorithms for
low-conductance sets often provide guarantees only in asymptotic
Big-$O$ form, for example $O(\log n)$ or $O(\sqrt{\log n})$, which
obscures the quality of the resulting cuts in practice
\citep{Leighton_1999,Khandekar_2009,Arora_2009}. Second, algorithms
for finding low-conductance sets under size or volume constraints are
typically bicriteria, meaning they do not enforce the size constraint
exactly \citep{Andreev_2006,Arora_2009,Bansal_2014}. As a result,
existing size-specific cut profiles serve only as proxies (upper
bounds) for the true profiles. \citet{hsg23} addressed this issue by
deriving a lower bound via a relaxation of the integer program for
$\mu$-conductance and demonstrated that this bound can be computed on
large real-world graphs with millions of nodes and tens of millions of
edges.

In this paper, we show that the spectral program studied in
\citet{hsg23} satisfies a Cheeger-type inequality with
$\mu$-conductance. Cheeger’s inequality is a foundational result in
spectral graph theory, relating graph conductance to
the eigenvalues of the normalized Laplacian. The original inequality of
\citet{cheeger69} bounds the smallest Laplacian eigenvalue of a
Riemannian manifold; its discrete analogue for graphs appeared later,
with the earliest instances due to \citet{dodziuk84} and
\citet{alonmilman85}. It has played a central
role in the design of graph algorithms and has been extended in many
directions, including directed graphs and hypergraphs
\citep{Benson_2016,Lau_2023}, multiway cuts \citep{Louis_2012},
higher-order spectral gaps \citep{Kwok_2013,Lee_2014}, 
general types of diffusions \citep{Ghosh-2014-cheeger}, and conductance
across multiple graphs \citep{Koutis_2023} (which can be specialized for 
useful MinCut-specific bounds~\cite{Zhu-2016-parallel-mincut}). 

The classical Cheeger
inequality states that
\[
2 \phi_G \ge \lambda_G \ge \phi_G^2 / 2,
\]
where $\lambda_G$ is the spectral gap of the normalized Laplacian and
$\phi_G$ is the minimum conductance over all vertex sets in $G$.

Let $\lambda_{\mu}$ denote the optimum of the spectral program for
$\mu$-conductance and let $\phi_{\mu}$ denote the $\mu$-conductance.
We prove that
\[
2 \phi_{\mu} \ge \lambda_{\mu} \ge
\frac{1}{4}\left( \frac{\mu^2 \phi_{\mu} + \left(1/2 - \mu\right)\phi_G}{\mu^2 + 1/2 - \mu} \right)^2.
\]
This result can be viewed as a size-specific generalization of the
classical Cheeger inequality, relating $\mu$-conductance to the optimum
value of the corresponding spectral relaxation. When $\mu = 0$, the
size constraint is vacuous, and we recover the standard setting:
$\phi_{\mu} = \phi_G$, $\lambda_{\mu} = \lambda_G$, and the inequality
reduces to the classical Cheeger bound. In contrast, as $\mu$ approaches
$1/2$, corresponding to increasingly balanced cuts, our bound implies
\[
2 \phi_{\mu} \ge \lambda_{\mu} \ge \phi_{\mu}^2 / 4.
\]

\section{Preliminaries}

  In a weighted undirected graph \(G = (V, E, w : E \to \R)\) with
  \(n\) vertices and \(m\) edges, each edge \(uv \in E\) has weight
  \(w(u, v) \in \R\). The (weighted) degree of a vertex \(v\) is defined
  as the sum of weights of its incident edges:
  $d(v) \defeq \sum_{u \in V : uv \in E} w(v, u)$.

  For a vertex set \(S \subseteq V\), its \emph{volume} is the total degree of
  vertices in \(S\): $\vol(S) \defeq \sum_{v \in S} d(v)$. This serves as a
  size measure analogous to \(|S|\). The volume of the entire graph is
  $ \vol(G) \defeq \vol(V) = \sum_{v \in V} d(v)$. The set of edges with
  one endpoint in \(S\) and the other in \(\Sbar = V \setminus S\) is the
  \emph{cut} separating \(S\) from \(\Sbar\):
  $\partial S \defeq \{uv \in E : u \in S, v \in \Sbar \}$, with total weight
  $|\partial S| \defeq \sum_{uv \in \partial S} w(u, v)$.  Finding sparse cuts is central
  to community detection, since different communities are often weakly
  connected. However, the raw cut size ignores the relative sizes of
  \(S\) and \(\Sbar\), which can lead to highly unbalanced
  partitions. To address this, \emph{conductance} normalizes cut size
  by the smaller volume:
  \[ \phi(S) \defeq \frac{|\partial S|}{\min\{\vol(S), \vol(\Sbar)\}}. \]

  Low-conductance sets tend to induce sparse and more balanced cuts,
  since \(\vol(S) + \vol(\Sbar) = \vol(G)\). The conductance of the
  graph is then $\phi_G \defeq \min_{S \subset V} \phi(S)$, which measures the
  overall connectivity of \(G\).

  An important toolkit for studying graph structure, for instance
  conductance, is \emph{spectral graph theory}, which focuses on
  graph-associated matrices. The \emph{degree matrix} of \(G\) is the
  diagonal matrix \(\mD\) with \(\mD_{vv} = d(v)\). The \emph{graph
    Laplacian} is
\[
  \mL \defeq \sum_{uv \in E} w(u, v) \, (\vone_{u} - \vone_{v})(\vone_{u} -
  \vone_{v})^T,
\]
where \(\vone_{S}\) denotes the indicator vector of a set \(S\), and
\(\vone_{a} = \vone_{\{a\}}\). The normalized Laplacian is
$\mD^{-1/2} \mL \mD^{-1/2}$.

\section{Cheeger Inequality for $\mu$-conductance}
In this section, we first define the \(\mu\)-conductance, which provides
a more fine-grained view of a graph’s cut structure. In contrast, the
standard notion of conductance captures only a single aspect of the
graph—namely, the set corresponding to the worst bottleneck. We then
introduce the modified spectral program inspired by the
\(\mu\)-conductance integer program from \citet{hsg23}. Finally, we show
that the optimum of this spectral program satisfies a two-sided
Cheeger inequality with respect to \(\mu\)-conductance.

\subsection{The definition of \(\mu\)-conductance}

The idea of \(\mu\)-conductance is a parameterized variant of
conductance that arises from Markov chain theory
\citep{Lovasz-1990-mixing}. Basically, it computes the best set with
the smallest conductance disregarding those sets with volume smaller than
\(\mu \vol(G)\). Formally, it is defined as \footnote{Here we adopt a
  slightly different definition from the original paper. The original
  $\mu$-conductance is defined as
  \begin{align*}
  \phi_{\mu}(G) = \min_{S: \vol(S) \ge \mu, \vol(\bar{S}) \ge \mu} \sizeof{\partial S} /
  \min \setof{\vol(S) - \mu, \vol(\bar{S}) - \mu}.
  \end{align*}
  They are similar in spirit as they both neglect sets with volume
  smaller than a specific volume but the original one involves a
  perturbed conductance. The one we adopt keeps the standard
  conductance definition and only adds the size constraint.  We
  believe this provides a cleaner picture in network analysis.}
\begin{equation}
\phi_\mu(G) = \!\begin{array}[t]{l@{\,\,\,}l} \displaystyle \mathop{\text{minimize}}_{S \subset V} &
\phi(S) \\ \text{subject to} & \mu \vol(G) \! \le \! \vol(S) \! \le \! (1-\mu)\vol(G).  \end{array}
\label{eq:mu-cond}
\end{equation}
In the notation of \(\mu\)-conductance, the original conductance
$\phi_G$ can be expressed as \(\phi_0(G)\).  So by computing
\(\mu\)-conductance for multiple \(\mu\)s, we may be able to gain additional
information about a graph.  This will only be productive if $\phi_G$
arises \emph{only} from a small set in the graph. If $\phi(G)$ arises
from a set of nearly $\vol(G)/2$, then there are no differences
between conductance and $\mu$-conductance.

\subsection{A spectral program for $\mu$-conductance}

Before we introduce our modified spectral program for
\(\mu\)-conductance, let us first revisit the relationship between
conductance and the spectral cut to smooth the transition to our new
program. Basically, the problem of finding the set of smallest
conductance is equivalent to (up to a constant factor) the following
integer program
\begin{equation}
\begin{split}
\label{eq:conductance-ip}
  \displaystyle \mathop{\text{minimize}}_{S \subset V} & \quad \frac{\vpsi^T \mL \vpsi}{\vpsi^T \mD
    \vpsi} \\ \text{subject to} & \quad \vpsi = \frac{1}{\vol(S)} \vone_S - \frac{1}{\vol(\Sbar)}
  \vone_{\Sbar}
\end{split}
\end{equation}
where $\vpsi$ is a shifted and scaled indicator vector for the vertex
set $S$. The spectral cut
\begin{align}
\begin{split}
\label{eq:conductance-spectral}
  \displaystyle \mathop{\text{minimize}}_{\vx \in \R^{|V|} } & \quad \frac{\vx^T \mL \vx}{\vx^T \mD
    \vx} \\ \text{subject to} & \quad \vx^T \vd = 0.
\end{split}
\end{align}
is actually a relaxation of \eqref{eq:conductance-ip} by replacing \(\vpsi\) with vectors orthogonal
to $\vd$.

Notice that the objective in \eqref{eq:conductance-ip} is scale-invariant with regard to $\vpsi$,
it can be rewritten as
\begin{align*}
  \displaystyle \mathop{\text{minimize}}_{S \subset V} & \quad \vpsi^T \mL \vpsi
  \\ \text{subject to} & \quad \vpsi^T \mD \vpsi = 1 \\ 
                                                 &  \quad  \vpsi = \sqrt{\frac{\vol(S) \vol(\bar{S})}{\vol(G)}}\parof{\frac{1}{\vol(S)} \vone_S - \frac{1}{\vol(\Sbar)} 
                                                   \vone_{\Sbar}},
\end{align*}
where $\vpsi$ is normalized to satisfy $\vpsi^T \mD \vpsi = 1$.

We have a similar integer program for \(\mu\)-conductance, simply adding
additional volume constraints on $S$,
\begin{equation}
\begin{split}
\label{eq:mu-conductance-ip}
\displaystyle \mathop{\text{minimize}}_{S \subset V} \quad & \vpsi^T \mL \vpsi \\
\text{subject to} \quad & \vpsi = \sqrt{\frac{\vol(\Sbar)}{\vol(S)\vol(G)}} \vone_S -
\sqrt{\frac{\vol(S)}{\vol(\Sbar) \vol(G)}} \vone_{\Sbar} \\
& \vpsi^T \mD \vpsi = 1 \\
& \mu \vol(G) \le \vol(S) \le (1 - \mu) \vol(G).
\end{split}
\end{equation}
Notice that the constraint
\(\mu \vol(G) \! \le \! \vol(S) \! \le \! (1 - \mu) \vol(G) \) implies that
the entries of \(\vpsi\) must be delocalized, in other words there
will be no entries with very large magnitude and no entries with very
small magnitude. Hence this integer program can be relaxed to the
following spectral program
\begin{align}
\label{eq:mu-conductance-spectral}
\begin{split}
  \lambda_\mu = \displaystyle \mathop{\text{minimize}}_{\vx \in \R^{|V|} } \quad & \vx^T \mL \vx \\
\text{subject to} \quad &  \vx^T \vd = 0 \\
                        &  \vx^T \mD \vx = 1 \\
                        &  \|\vx\|_{\infty} \leq \sqrt{\frac{1 - \mu}{\mu \vol(G)}} \\ 
                        &  |\vx_i| \geq \sqrt{\frac{\mu}{(1 - \mu) \vol(G)}}, \quad \forall i \in V.  
\end{split}
\end{align}

\subsection{Main Result}
In this section, we demonstrate that a non-trivial relation is preserved by the relaxation from
\eqref{eq:mu-conductance-ip} to \eqref{eq:mu-conductance-spectral} and the optimum of spectral
program \eqref{eq:mu-conductance-spectral} has a two-sided Cheeger inequality with
\(\mu\)-conductance. Our main result is the following Theorem.
\begin{theorem}
\label{thm:cheeger}
Given a graph \(G\) and a constant \(0 \leq \mu \leq 1/2\), we have
\begin{align*}
  2 \phi_{\mu} \geq \lambda_{\mu} \geq \frac{1}{4} \parof{ \frac{\mu^2 \phi_{\mu} + \parof{1/2 - \mu} \phi_0}{\mu^2 + 1 / 2 - \mu}  }^2
\end{align*}
\end{theorem}

For simplicity, we assume that there exists a set \(S\) with volume
between \(\mu \vol(G)\) and \((1 - \mu)\vol(G)\); otherwise, the notion of
\(\mu\)-conductance for this value of \(\mu\) is not meaningful. The lower
bound is essentially the square of an interpolation between
\(\phi_{\mu}\) and \(\phi_0\).

At the two extreme cases, when \(\mu = 0\) we have
\(2\phi_0 \geq \lambda_0 \geq \phi_0^2 / 4\) (the standard Cheeger up to a constant
factor), and when \(\mu = 1/2\) we have
\(2\phi_{1/2} \geq \lambda_{1/2} \geq \phi_{1/2}^2 / 4\).

We point out one disadvantage of our result. Unlike the linear upper
bound \(\lambda_\mu \le 2\phi_\mu\), our bound does not yield a
square-root upper bound on \(\phi_\mu\) in terms of \(\lambda_\mu\) that
holds uniformly in \(\mu\), as the coefficient
\(\mu^2 / (\mu^2 + 1/2 - \mu)\) of \(\phi_\mu\) in the lower bound goes to
\(0\) as \(\mu \downarrow 0\).

We split the proof of Theorem~\ref{thm:cheeger} into two parts, the lower
bound of \(\lambda_\mu\) and the upper bound of \(\lambda_\mu\).

\subsection{Proof for the upper bound of \(\lambda_\mu\)}
This is the easier side of \Cref{thm:cheeger}, of which the proof was
also included in~\citet{hsg23}. We include it here for completeness.
The proof directly follows from the fact that
\eqref{eq:mu-conductance-spectral} is one relaxation of
\eqref{eq:mu-conductance-ip}.
\begin{lemma}
\label{lem:cheeger-upper-bound}
    Given a graph \(G\) and a constant \(0 \le \mu \le 1/2\), we have
    \[2 \phi_\mu \ge \lambda_\mu.\]
\end{lemma}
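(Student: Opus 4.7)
The plan is to exhibit an explicit feasible vector for the spectral program \eqref{eq:mu-conductance-spectral} whose Rayleigh-like value is at most $2\phi_\mu$. The natural candidate is precisely the characteristic vector $\vpsi$ used in the integer program \eqref{eq:mu-conductance-ip}: take $S^\star$ to be a set attaining $\phi_\mu$ (which exists by the standing assumption), and define
\[
\vx \;=\; \sqrt{\tfrac{\vol(\overline{S^\star})}{\vol(S^\star)\vol(G)}}\,\vone_{S^\star} \;-\; \sqrt{\tfrac{\vol(S^\star)}{\vol(\overline{S^\star})\vol(G)}}\,\vone_{\overline{S^\star}}.
\]

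First I would check the four constraints of \eqref{eq:mu-conductance-spectral} one by one. The identity $\vx^T\vd = 0$ and the normalization $\vx^T\mD\vx = 1$ follow directly from the two-scalar formula. For the $\ell_\infty$ bound, observe that $\|\vx\|_\infty$ equals $\sqrt{\vol(\overline{S^\star})/(\vol(S^\star)\vol(G))}$ (since $\vol(S^\star)\le\vol(\overline{S^\star})$ may be assumed), and the volume constraint $\mu\vol(G)\le \vol(S^\star) \le (1-\mu)\vol(G)$ forces this to be at most $\sqrt{(1-\mu)/(\mu\vol(G))}$. The entry-wise lower bound $|x_i|\ge \sqrt{\mu/((1-\mu)\vol(G))}$ follows symmetrically from the other side of the same volume constraint.

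Next I would compute $\vx^T\mL\vx$ directly from $\mL = \sum_{uv\in E} w(u,v)(\vone_u-\vone_v)(\vone_u-\vone_v)^T$. Only edges in $\partial S^\star$ contribute, and an easy expansion of $(\sqrt{a}+\sqrt{b})^2$ with $a = \vol(\overline{S^\star})/(\vol(S^\star)\vol(G))$ and $b = \vol(S^\star)/(\vol(\overline{S^\star})\vol(G))$ (using $\sqrt{ab}=1/\vol(G)$) collapses to
\[
\vx^T\mL\vx \;=\; |\partial S^\star|\cdot \frac{\vol(G)}{\vol(S^\star)\vol(\overline{S^\star})}
\;=\; \phi(S^\star)\cdot \frac{\vol(G)}{\max\{\vol(S^\star),\vol(\overline{S^\star})\}}
\;\le\; 2\phi(S^\star) \;=\; 2\phi_\mu,
\]
where the last inequality uses that the larger side has volume at least $\vol(G)/2$. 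Since $\vx$ is feasible, $\lambda_\mu \le \vx^T\mL\vx \le 2\phi_\mu$.

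I do not expect a real obstacle here: the proof is a direct feasibility check plus one Rayleigh-quotient computation. The only mildly subtle point is recognizing that the four constraints of \eqref{eq:mu-conductance-spectral} were designed precisely so that the two-value step function $\vpsi$ from \eqref{eq:mu-conductance-ip} remains feasible after the relaxation, which is exactly why the $\mu$-volume constraint translates cleanly into the $\|\vx\|_\infty$ upper bound and the entry-wise lower bound.
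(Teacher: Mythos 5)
Your proposal is correct and follows essentially the same route as the paper: both take the optimal set $S$ for $\phi_\mu$, observe that the corresponding two-valued vector $\vpsi$ from \eqref{eq:mu-conductance-ip} is feasible for \eqref{eq:mu-conductance-spectral}, and compute $\vpsi^T\mL\vpsi = |\partial S|\vol(G)/(\vol(S)\vol(\Sbar)) \le 2\phi_\mu$. Your explicit verification of the four constraints fills in a step the paper leaves implicit, which is a welcome addition.
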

\begin{proof}
  Let \(S\) be the vertex set which achieves the optimal
  \(\mu\)-conductance. By our relaxation, we know
    \[
      \vpsi = \sqrt{\frac{\vol(\Sbar)}{\vol(S)\vol(G)}} \vone_S -
      \sqrt{\frac{\vol(S)}{\vol(\Sbar) \vol(G)}} \vone_{\Sbar}
    \]
    is naturally in the feasible region of
    \eqref{eq:mu-conductance-spectral}. Then, the
    corresponding objective value is
    \begin{align*}
         \vpsi^T \mL \vpsi 
    &= \frac{| \partial S| \vol(G)}{\vol(S) \vol(\Sbar)} \\
    &\leq  \frac{2 |\partial S|}{\min \{\vol(S), \vol(\Sbar) \}} \\
    &= 2 \phi_{\mu},
    \end{align*}
    which implies \(\lambda_\mu \le 2 \phi_\mu\).
\end{proof}

\subsection{Proof for lower bound of \(\lambda_\mu\)}

Now we turn to the challenging part of the proof, lower bounding
\(\lambda_\mu\) via \(\phi_\mu\). Our proof is mainly inspired by the proof in
\citet{Chung-2007-four}, with a few significant differences that we
will point out in the proof.

Let \(\vg\) denote the vector that achieves the optimum of
\eqref{eq:mu-conductance-spectral}. We order all vertices such that
\(\vg(v_1) \ge \vg(v_2) \ge \cdots \ge \vg(v_n)\) and let
$S_j = \{v_1, v_2, \ldots, v_j \}$ denote the set of first $j$ vertices for
\(j \in [n]\).  Let $h$ be the largest index such that
$\vol(S_h) \leq \vol(G) / 2$ (this implies that $h < n$).  As a result, we have
\begin{align}
  \label{eq:2}
  \min \setof{ \vol(S_i), \vol(\bar{S}_{i}) } =
  \begin{cases}
    \vol(S_i),\ \forall i \leq h \\
    \vol(\bar{S}_i),\ \forall i > h
  \end{cases}.
\end{align}
Then, we decompose $\vg$ into two vectors, denoted by $\vg_+$
and $\vg_-$ where
\begin{align*}
    \vg_+(v_i) &= \max(\vg(v_i) - \vg(v_{h+1}), 0), \\
    \vg_-(v_i) &= \max(\vg(v_{h+1}) - \vg(v_i), 0).
\end{align*}
Intuitively, $\vg_+$ and $\vg_-$ correspond to the positive and
negative components of ``shifted'' $\vg$ with $\vg(v_{h+1})$ being the
new zero point. By definition, we have
\begin{align*}
  \vg_+(v_1) \ge \ldots \geq \vg_+(v_h) \geq \vg_+(v_{h+1})  = \ldots = \vg_+(v_n) = 0
\end{align*}
and 
\begin{align*}
\vg_-(v_n) \geq  \ldots \geq \vg_-(v_{h+2}) \geq \vg_-(v_{h+1})  = \ldots = \vg_-(v_1)  = 0.
\end{align*}
For any $\vx \in \R^{V}$, let 
 \begin{align*}
   R(\vx) \defeq \frac{\vx^T \mL \vx}{\vx^T \mD \vx}.
 \end{align*}
 On one hand, because $\sum_v^{} \vg(v) d(v) = 0$,
 \begin{align*}
 \vg(v_{h+1}) \sum_v^{} \vg(v)  d(v) = 0,
 \end{align*}
 thus
 \begin{align*}
   & \quad \sum_v^{} \parof{ \vg_+(v)^2 d(v) + \vg_-(v)^2d(v)} \\
   & = \sum_v^{}\parof{ \vg(v) - \vg(v_{h+1}) }^2 d(v) \\
   &  = \sum_v^{} \vg(v)^2 d(v) - 2 \sum_v^{} \vg(v) \vg(v_{h+1}) d(v) + \sum_v^{} \vg(v_{h+1})^2 d(v) \\
     & \ge \sum_v^{} \vg(v)^2 d(v).
 \end{align*}
 On the other hand, with some casework, we have
 \begin{align*}
 \parof{\vg(u) - \vg(v)}^2 \leq \parof{\vg_+(u) - \vg_+(v)}^2 + \parof{\vg_-(u) - \vg_-(v)}^2.
 \end{align*}
 Together we have
 \begin{align}
   \lambda_{\mu} = R(\vg) &= \frac{\sum_{uv \in E}(\vg(u) - \vg(v))^2}{\sum_{v \in V} d(v) \vg(v)^2} \nonumber \\
          & \geq \frac{\sum_{uv \in E}(\vg_+(u) - \vg_+(v))^2 + \sum_{uv \in
        E}(\vg_-(u) - \vg_-(v))^2}{\sum_{v \in V} d(v) \vg_+(v)^2 + \sum_{v \in V} d(v)
      \vg_-(v)^2}. \label{eq:main-term}
 \end{align}
 In \citet{Chung-2007-four}, to get $\lambda_0 \geq \phi_0^2/2$, one can analyze
 either $R(\vg_+)$ or $R(\vg_-)$ because \eqref{eq:main-term} implies
 that $R(\vg) \ge \min \setof{R(\vg_+), R(\vg_-)}$ and
 $R(\vg_+), R(\vg_-)$ are symmetric. However, simply considering one
 of $R(\vg_+)$ and $R(\vg_-)$ can not give a clean lower bound
 involving $\phi_{\mu}$ because the entries of $\vg_+$ or $\vg_-$ can be
 arranged in a way that $R(\vg_+)$ or $R(\vg_-)$ is overly relaxed,
 which gives the trivial lower bound $\phi_0^2 / 2$.  So in our proof, we
 take $\vg_+$ and $\vg_-$ into account simultaneously. In this way the
 entries of $\vg_+$ and $\vg_-$ have less freedom and we get a lower
 bound tighter than $\phi_0^2/2$, in particular with $\phi_{\mu}$ involved.

 For convenience, we let $\vp_u$ denote $\vg_+(u)$ and $\vq_u$ denote
 $\vg_-(u)$. When there is no risk of confusion, we abuse the
 notations to let $\vp_i$ be $\vp_{v_i}$ and $\vq_i$ be $\vq_{v_i}$.
 As a result, \eqref{eq:main-term} can be written as
 \begin{align}
   \label{eq:rewritten-frac}
  \lambda_{\mu} \geq \frac{\sum_{uv \in E}^{}\parof{\vp_u - \vp_v}^2 + \sum_{uv \in E}^{} \parof{\vq_u - \vq_v}^2}
  {\sum_{v \in V}^{} d(v) \vp_u^2 + \sum_{v \in V}^{} d(v) \vq_u^2}
\end{align}

 We multiply the numerator and denominator of \eqref{eq:main-term} both by
 \begin{align*}
 \sum_{uv \in E}^{} \parof{\vp_u + \vp_v}^2 + \parof{\vq_u + \vq_v}^2 .
 \end{align*}
 By Cauchy--Schwarz inequality, the numerator is lower bounded by 
 \begin{align}
   & \quad    \sum_{uv \in E}^{} \parof{ \parof{ \vp_u - \vp_v}^2 + \parof{\vq_u - \vq_v}^2 }
     \sum_{uv \in E}^{} \parof{ \parof{ \vp_u + \vp_v}^2 + \parof{\vq_u + \vq_v}^2 }  
     \nonumber \\
   & \ge \parof{ \sum_{uv \in E}^{} \sqrt{ \parof{ \vp_u - \vp_v}^2 + \parof{\vq_u- \vq_v}^2 }
     \sqrt{ \parof{ \vp_u + \vp_v}^2 + \parof{\vq_u+ \vq_v}^2 } }^2 \nonumber \\
   & \tago{\geq}
     \parof{
     \sum_{uv \in E}^{} \sqrt{\parof{\vp_u^2 - \vp_v^2}^2 + \parof{\vq_u^2 - \vq_v^{2}}^2}
     }^2 \nonumber \\
   & \tago{\geq}
     \parof{
     \sum_{uv \in E}^{}
     \frac{\absvof{\vp_u^2 - \vp_v^2} + \absvof{\vq_u^2 - \vq_v^2}}{\sqrt{2}}
     }^2 \nonumber \\
   & \tago{=}
     \frac{1}{2}\parof{
     \sum_{i = 1}^{n-1}
     \parof{  \vp_i^2 -  \vp_{i+1}^2 } |\partial S_i| +
     \sum_{i=1}^{n-1}
     \parof{ \vq_{i+1}^2 - \vq_i^2} |\partial S_i|
     }^2 \label{eq:numerator}
 \end{align}
 where \tagr is by expanding the product and dropping nonnegative
 cross-terms, \tagr is because of AM-QM inequality, and \tagr
 is from rearranging terms.

 Similarly, the denominator is upper bounded by
 \begin{align}
  & \quad \parof{
     \sum_{v \in V} d(v) \vp_v^2 + \sum_{v \in V} d(v) \vq_v^2
     }
     \parof{
     \sum_{uv \in E}^{} \parof{ \parof{ \vp_u + \vp_v}^2 + \parof{\vq_u + \vq_v}^2 }
     } \nonumber \\
   & \tago{\leq}
     \parof{
     \sum_{v \in V} d(v) \vp_v^2 + \sum_{v \in V} d(v) \vq_v^2
     }
     \parof{
     \sum_{uv \in E}^{} 2\parof{  \vp_u^2 + \vp_v^2} + 2\parof{\vq_u^2 + \vq_v^2}
     } \nonumber \\
   & \tago{=} 2 \parof{\sum_{v \in V}^{} d(v) \parof{\vp_v^2 + \vq_v^2} }^2 \label{eq:denominator}
 \end{align}
 where \tagr is by AM-QM inequality and \tagr is by rearranging terms.

 Plugging \eqref{eq:numerator} and \eqref{eq:denominator} back into
 \eqref{eq:rewritten-frac} gives
 \begin{align}
   \lambda_{\mu} & \geq \frac{1}{4}
           \parof{
           \frac{
           \sum_{i = 1}^{n-1}
           \parof{  \vp_i^2 -  \vp_{i+1}^2 } |\partial S_i| +
           \sum_{i=1}^{n - 1}
           \parof{ \vq_{i+1}^2 - \vq_i^2} |\partial S_i|
           }
           {
           \sum_{v \in V}^{} d(v) \parof{\vp_v^2 + \vq_v^2}}
           }^2. \label{eq:1}
 \end{align}
 Now our goal is to lower bound \eqref{eq:1}.

 Before doing that, let us introduce a few indices which will
 repeatedly show up in the analysis.  Besides $h$, which is the
 largest integer such that $\vol(S_h) \leq \vol(G)/2$ as introduced
 before, we let $x, y$ be the largest indices such that
 $\vol(S_x) < \mu \vol(G), \vol(S_y) \leq (1 - \mu) \vol(G)$, respectively.
 We need these two indices because this time we cannot loosely lower
 bound every $\sizeof{\partial S_i } $ by
 $\phi_0 \min \setof{\vol(S_i), \vol(\Sbar_i)}$, we need a tighter bound
 involving $\phi_{\mu}$. Therefore we need to do some casework depending on
 $\vol(S_i)$, where $x$ and $y$ are two important cutoffs. These three
 indices are summarized in \Cref{tab:indices}. Furthermore, without
 loss of generality, we assume $\vg(v_{h+1}) < 0$; the case
 $\vg(v_{h+1}) \geq 0$ is symmetric because replacing $\vg$ with $-\vg$
 gives the same objective.

 \begin{table}[t]
    \centering
    \caption{Three key indices in our proof.}
    \begin{tabular}{rcl}
      \toprule 
      Index & Definition& Interpretation \\ 
      \midrule 
      $h$  & largest index $\st$ $\vol(S_h) \le \vol(G)/2$ & \emph{half}\\
      $x$  & largest index $\st$ $\vol(S_x) < \mu \vol(G)$ & $\mu$ fraction\\ 
      $y$  & largest index $\st$ $\vol(S_y) \leq (1 - \mu) \vol(G)$ & $(1 - \mu)$ fraction \\
      \bottomrule
    \end{tabular}
    \label{tab:indices}
  \end{table}

 With the constraints on the magnitudes of the entries and the condition
 $\vg^T \vd = 0$, intuitively there cannot be too many negative entries.
 In fact, $\vg(v_{x+1})$ is nonnegative and we have the following
 simple fact on its magnitude.
  \begin{lemma}
    \label{lem:vx+1entry}
   By definition of $x$, we have
   \[\vg(v_{x+1}) \geq \sqrt{\frac{\mu}{(1 - \mu) \vol(G)}}. \]
 \end{lemma}
 \begin{proof}
   Assume instead $\vg(v_{x+1}) < 0$, then
   \begin{align*}
     \sum_v^{} d(v) \vg(v) &= \sum_{i=1}^{x} d(v) \vg(v_i) + \sum_{i=x+1}^{n} d(v_i) \vg(v_i) \\
                        &\tago{\leq} \vol(S_{x}) \vg(v_1) + (\vol(G) - \vol(S_x)) \vg(v_{x+1}) \\
                        &\tago{\leq} \vol(S_x) \sqrt{\frac{1 - \mu}{\mu \vol(G)}} - (\vol(G) - \vol(S_x)) \sqrt{\frac{\mu}{(1 - \mu) \vol(G)}} \\
                        &\tago{<} \sqrt{\mu(1 - \mu) \vol(G)} - \sqrt{\mu(1-\mu) \vol(G)} = 0
   \end{align*}
   where \tagr is because the entries of $\vg$ are sorted in the
   decreasing order, \tagr is due to the constraints in
   \eqref{eq:mu-conductance-spectral} on the magnitude of the entries
   in $\vg$, and \tagr follows from the definition of $x$. This
   contradicts the constraint $\vg^T \vd = 0$. Therefore we know
   $\vg(v_{x+1}) \geq 0$ and by the lower bound on $\absvof{\vg_i}$ in
   \eqref{eq:mu-conductance-spectral}, we get the desired property.
 \end{proof}

 We now show that the ratio in \eqref{eq:1} is bounded below by
\begin{align}
   \label{eq:core}
        \frac{
           \sum_{i = 1}^{n-1}
           \parof{  \vp_i^2 -  \vp_{i+1}^2 } |\partial S_i| +
           \sum_{i=1}^{n - 1}
           \parof{ \vq_{i+1}^2 - \vq_{i}^2} |\partial S_i|
           }
           {
  \sum_{v \in V}^{} d(v) \parof{\vp_v^2 + \vq_v^2}}
  \geq
  \frac{\mu^2 \phi_{\mu} + (1/2 - \mu) \phi_0 }{\mu^2 - \mu + 1/2},
\end{align}
which, plugged into~\eqref{eq:1}, proves the lower bound of
$\lambda_{\mu}$ in \Cref{thm:cheeger}.
Before diving into the technical proof, let us gain some intuition
about the coefficients of $\phi_{\mu}, \phi_0$ in the bound. Consider a graph
with all vertices having roughly the same and relatively small
degrees, and a vector $\vg$ with
\begin{align*}
  \vg(v_i) \approx
  \begin{cases}
     \sqrt{\frac{1 - \mu}{\mu \vol(G)}}, \quad &1 \leq i \leq x \\
     \sqrt{\frac{\mu}{(1 - \mu) \vol(G)}}, \quad & i = x + 1 \\
    -  \sqrt{\frac{\mu}{(1 - \mu) \vol(G)}}, \quad & i > x + 1
  \end{cases}.
\end{align*}
After some slight value perturbation, $\vg$ can
satisfy $\vg^T \vd = 0, \vg^T \mD \vg = 1$ because
$\vol(S_{x}) \approx \vol(S_{x+1}) \approx \mu \vol(G)$. As a result, $\vg$ is a
feasible solution. And we have
\begin{align*}
  \vp_i \approx
  \begin{cases}
    \sqrt{\frac{1 - \mu}{\mu \vol(G)}} +
    \sqrt{\frac{\mu}{(1 - \mu) \vol(G)}}, \quad & 1 \leq i \leq x \\
    2 \sqrt{\frac{\mu}{(1 - \mu) \vol(G)}}, \quad & i = x + 1 \\
    0, \quad & i > x + 1
  \end{cases}, 
\end{align*}
and $\vq_i = 0$ for any $i$.

On one hand, for this feasible $\vg$, the numerator of the left side in
\eqref{eq:core} becomes roughly
$(\vp_x^2 - \vp_{x+1}^2) | \partial S_x| + \vp_{x+1}^2 |\partial S_{x+1}|$ and is
lower bounded by
\begin{align*}
 (\vp_x^2 - \vp_{x+1}^2) \phi_0 \vol(S_x) +  \vp_{x+1}^2 \phi_{\mu} \vol(S_{x+1})  \approx \frac{1 - 4 \mu^2}{1 - \mu} \phi_0 + \frac{4\mu^2}{1 - \mu} \phi_{\mu}.
\end{align*}
On the other hand, with some computation, we see that the denominator
is roughly $1 / (1 - \mu)$.  Overall, the lower bound for this $\vg$ is
roughly $4 \mu^2 \phi_{\mu} + (1 - 4 \mu^2) \phi_0$, and the lower bound is
tight when $\sizeof{\partial S_x} \approx \phi_0 \vol(S_x), \sizeof{\partial S_{x+1}} \approx \phi_{\mu} \vol(S_{x+1})$.

Although this does not rule out the possibility of having a tighter
lower bound with a totally different proof technique, this implies
that the lower bound we give is likely not possible to dramatically
improve within the current proof framework. In other words, applying
Cauchy--Schwarz and then considering threshold (sweep) cuts
$\sizeof{\partial S_i}$ is unlikely to give a strong lower bound.

Now, with this intuition, we establish \eqref{eq:core}. Observe that
  \begin{align}
    & \quad   \sum_{i = 1}^{n-1} \parof{  \vp_i^2 -  \vp_{i+1}^2 } |\partial S_i| \nonumber \\
    & \tago{=} \sum_{i=1}^x \parof{ \vp_i^2 - \vp_{i+1}^2} |\partial S_i|
      + \sum_{i=x+1}^h \parof{ \vp_i^2 - \vp_{i+1}^2} |\partial S_i| \nonumber\\
    & \tago{\geq} \phi_0 \sum_{i=1}^x \parof{\vp_i^2 - \vp_{i+1}^2} \vol(S_i)
      + \phi_{\mu} \sum_{i=x+1}^h \parof{ \vp_i^2 - \vp_{i+1}^2} \vol(S_i) \nonumber \\
    &\tago{=} \phi_0 \sum_{i=1}^x \vp_i^2 d(v_i) - \phi_0 \vp_{x+1}^2 \vol(S_x)
      + \phi_{\mu} \vp_{x+1}^2 \vol(S_{x}) + \phi_{\mu} \sum_{i=x+1}^{h} \vp_i^2 d(v_i) \nonumber\\
    &\tago{=} \phi_{\mu} \sum_{i=1}^h \vp_i^2 d(v_i) - (\phi_{\mu} - \phi_0) \sum_{i=1}^x (\vp_i^2 - \vp_{x+1}^2) d(v_i) \nonumber\\
    &\tago{=} \phi_{\mu} \sum_{i=1}^n \vp_i^2 d(v_i)- (\phi_{\mu} - \phi_0) \sum_{i=1}^{x+1} (\vp_i^2 - \vp_{x+1}^2) d(v_i) \label{eq:psum}
  \end{align}
  where \tagr is because $\vp_{h+1} = \ldots = \vp_n = 0$,
  \tagr uses the definitions of $\phi_0, \phi_{\mu}$ and the fact \eqref{eq:2},
  \tagr and \tagr rearrange terms, and lastly \tagr again uses
  $\vp_{h+1} = \ldots = \vp_n = 0$.

  We notice that $y + 1$ has a symmetric role with $x+1$ because $x + 1$ is the smallest integer such that $\sum_{i=1}^{x+1} d(v_i) = \vol(S_{x+1}) \geq \mu \vol(G)$,
  and $y + 1$ is the largest integer that $\sum_{i = y + 1}^n d(v_i) = \vol(\bar{S}_y) \geq \mu \vol(G)$.
  Together with the fact that $\vp$ and $\vq$ are symmetric, we get the following result
    \begin{align}
       \sum_{i = 1}^{n-1} \parof{  \vq_{i+1}^2 -  \vq_{i}^2 } |\partial S_i| 
      \geq \phi_{\mu} \sum_{i=1}^n \vq_i^2 d(v_i) - (\phi_{\mu} - \phi_0) \sum_{i=y+1}^n (\vq_i^2 - \vq_{y+1}^2) d(v_i).
\label{eq:q-sum}
    \end{align}
    Combining \eqref{eq:psum} and \eqref{eq:q-sum}, we get that
    \begin{align}
      &\quad      \frac{
        \sum_{i = 1}^{n-1}
        \parof{  \vp_i^2 -  \vp_{i+1}^2 } |\partial S_i| +
        \sum_{i=1}^{n - 1}
        \parof{ \vq_{i+1}^2 - \vq_i^2} |\partial S_i|
        }
        {
        \sum_{v \in V}^{} d(v) \parof{\vp_v^2 + \vq_v^2}} \nonumber \\
      & \ge \phi_{\mu} -
        \parof{\phi_{\mu} - \phi_0}
        \frac{\sum_{i=1}^{x+1} (\vp_i^2 - \vp_{x+1}^2) d(v_i) + \sum_{i=y+1}^n (\vq_i^2 - \vq_{y+1}^2) d(v_i)}
        {\sum_{i=1}^n d(v_i) \parof{\vp_i^2 + \vq_i^2}} \nonumber\\
      & \tago{\geq} \phi_{\mu} - \parof{\phi_{\mu} - \phi_0}\frac{\sum_{i=1}^{x+1} (\vp_i^2 - \vp_{x+1}^2) d(v_i) + \sum_{i=y+1}^n (\vq_i^2 - \vq_{y+1}^2) d(v_i)}{ \sum_{i=1}^{x+1}\vp_i^2 d(v_i) + \sum_{i=y+2}^{n} \vq_i^2 d(v_i)}\nonumber \\
      & =   \phi_{\mu} - \parof{\phi_{\mu} - \phi_0}\frac{\sum_{i=1}^{x+1} (\vp_i^2 - \vp_{x+1}^2) d(v_i) + \sum_{i=y+2}^n (\vq_i^2 - \vq_{y+1}^2) d(v_i)}{ \sum_{i=1}^{x+1}\vp_i^2 d(v_i) + \sum_{i=y+2}^{n} \vq_i^2 d(v_i)}\nonumber \\    
      & \tago{=} \phi_{0} + \parof{\phi_{\mu} - \phi_0}\frac{ \vp_{x+1}^2 \vol(S_{x+1}) +  \vq_{y+1}^2 \vol(\Sbar_{y+1})}{ \sum_{i=1}^{x+1}\vp_i^2 d(v_i) + \sum_{i=y+2}^{n} \vq_i^2 d(v_i)} . \label{eq:3}
    \end{align}
    where \tagr is because $\sum_{i=x+2}^n \vp_i^2 d(v_i) + \sum_{i=1}^{y+1} \vq_i^2 d(v_i) \geq 0$ and
    \tagr rearranges terms.
    
    As shown in \Cref{lem:vx+1entry},
    \begin{align*}
     \vg(v_{x+1}) \geq \sqrt{ \frac{\mu}{(1 - \mu) \vol(G)}}.
    \end{align*}
    We also know that
    \begin{align*}
      - \sqrt{\frac{1 - \mu}{\mu \vol(G)}} \leq  \vg(v_i) \leq  \sqrt{\frac{(1 - \mu)}{\mu \vol(G)}},
    \end{align*}
    and
    \begin{align*}
     \vg(v_{h+1}) \leq - \sqrt{\frac{\mu}{(1 - \mu) \vol(G)}}
    \end{align*}
    because we made the assumption that $\vg(v_{h+1}) < 0$.
    Therefore for any $i \leq x$,
    \begin{align*}
      \frac{\vp_i}{\vp_{x+1}} &= \frac{\vg(v_i) - \vg(v_{h+1})}{\vg(v_{x+1}) - \vg(v_{h+1})}
      = 1 +  \frac{\vg(v_i) - \vg(v_{x+1})}{\vg(v_{x+1}) - \vg(v_{h+1})} \\
      &\leq 1 + \parof{\sqrt{\frac{1 - \mu}{\mu}} - \sqrt{\frac{\mu}{1 - \mu}}} / \parof{2 \sqrt{\frac{\mu}{1 - \mu}}}
      = \frac{1}{2\mu}
    \end{align*}
    and for any $i \ge y + 2$,
    \begin{align*}
      \frac{\vq_i}{\vp_{x+1}} = \frac{\vg(v_{h+1}) - \vg(v_i)}{\vg(v_{x+1}) - \vg(v_{h+1})}
      \leq \parof{-\sqrt{\frac{\mu}{1 - \mu}} + \sqrt{\frac{1 - \mu}{\mu}}} / \parof{2 \sqrt{\frac{\mu}{1-\mu}}}
      = \frac{1}{2\mu} - 1.
    \end{align*}
Hence
    \begin{align*}
      &\quad \frac{ \vp_{x+1}^2 \vol(S_{x+1}) +  \vq_{y+2}^2 \vol(\Sbar_{y+1})}{ \sum_{i=1}^{x+1}\vp_i^2 d(v_i) + \sum_{i=y+2}^{n} \vq_i^2 d(v_i)} \\
      &\geq \frac{ \vp_{x+1}^2 \vol(S_{x+1})}{ \sum_{i=1}^{x+1}\vp_i^2 d(v_i) + \sum_{i=y+2}^{n} \vq_i^2 d(v_i)} \\
      &= \frac{  \vol(S_{x+1})}{ \sum_{i=1}^{x+1} d(v_i) \vp_i^2 / \vp_{x+1}^2  + \sum_{i=y+2}^{n} d(v_i) \vq_i^2 / \vp_{x+1}^2 } \\
      & \geq \frac{  \vol(S_{x+1})}{  \vol(S_{x+1}) / 4\mu^2 + (1 / 2\mu - 1)^2 \vol(\Sbar_{y+2})} \\
      & = \frac{  1}{ 1 / 4\mu^2  + (1 / 2\mu - 1)^2 \vol(\Sbar_{y+2}) / \vol(S_{x+1})} \\
      & \tago{\ge} \frac{1}{1 / 4\mu^2 + (1 / 2\mu - 1)^2} \\
      & = \frac{\mu^2}{\mu^2 - \mu + 1/2},
    \end{align*}
    where \tagr uses the fact that $\vol(\Sbar_{y+2}) \leq \mu \vol(G) \leq \vol(S_{x+1})$.
    Plugging the above inequality into \eqref{eq:3} establishes
    \eqref{eq:core}, and hence the lower bound of \Cref{thm:cheeger}.

\section{Computational considerations and examples}
Because of the nonconvex constraints, such as $\vx^T\mD\vx = 1$,
the program in \eqref{eq:mu-conductance-spectral} is nonconvex. In
addition, the constraints on the magnitudes of the entries make the
problem no longer a generalized eigenvalue problem. We are
not aware of any technique that can compute the global optimum of this
program.

One important use of this spectral program, as shown in \citet{hsg23},
is to lower bound the $\mu$-conductance; in other words, it corresponds
to the easier direction of \Cref{thm:cheeger},
$\phi_{\mu} \ge \lambda_{\mu}/2$. To track this lower bound effectively, we perform
a sequence of transformations as originally shown in \citet{hsg23}.

We first relax \eqref{eq:mu-conductance-spectral} to the following
semidefinite program:
\begin{align}
\label{eq:mu-conductance-sdp}
\begin{split}
  \lambda_\mu^{\text{SDP}} = \displaystyle \mathop{\text{minimize}}_{\mX \in \R^{|V| \times |V|} } \quad & \Tr \parof{\mL \mX} \\
\text{subject to} \quad &  \Tr \parof{\mD \mX} = 1\\
                        &  \Tr \parof{\vd \vd^T \mX}= 0\\
                        &  \Diag(\mX) \leq \frac{(1 - \mu)\ones}{\mu \vol(G)}\\ 
                        &  \Diag(\mX) \geq \frac{\mu\ones}{(1 - \mu)\vol(G)}.
\end{split}
\end{align}
This relaxation is obtained by lifting $\vx\vx^T$ to a
positive semidefinite matrix $\mX$, and it makes the program convex.
As a consequence, $\lambda_\mu^{\text{SDP}}/2$ is a valid lower bound
on $\phi_\mu$.

Despite being convex, semidefinite programs do not scale well and are
impractical for graphs with millions of vertices and tens of millions
of edges. However, the above SDP is weakly constrained, with only
$2|V| + 2 = 2n + 2$ constraints. By the result of Barvinok and Pataki
\citep{Barvinok_1995,Pataki_1998}, such programs admit an optimal
solution of rank $O(\sqrt{C})$, where $C$ is the number of
constraints. In our case, this implies the existence of a solution of
rank $O(\sqrt{n})$.

Motivated by this observation, we consider the classical
Burer-Monteiro
factorization \citep{BMNonlinear2003,BMLocal2005,Burer_2006}, which
factorizes the $n \times n$ decision variable $\mX$ into the product
$\mY\mY^T$ where $\mY \in \R^{n \times r}$.  The factorization automatically
eliminates the positive semidefinite constraint. The resulting
low-rank semidefinite program can be written as
\begin{align}
  \label{eq:mu-conductance-lrsdp}
  \begin{split}
      \lambda_\mu^{\text{LRSDP}} = \displaystyle \mathop{\text{minimize}}_{\mY \in \R^{n \times r} } \quad & \Tr \parof{\mL \mY \mY^T} \\
\text{subject to} \quad &  \Tr \parof{\mD \mY \mY^T} = 1 \\
                        &  \Tr \parof{\vd \vd^T \mY \mY^T}= 0 \\
                        &  \Diag(\mY \mY^T) \leq \frac{(1 - \mu) \ones}{\mu \vol(G)} \\ 
                        &  \Diag(\mY \mY^T) \geq \frac{\mu \ones}{(1 - \mu) \vol(G)}.
  \end{split}
\end{align}

Although this factorization substantially reduces the number of
variables, it renders the optimization problem nonconvex. Recent
results show that, for generic problem instances, all local optima of
low-rank semidefinite programs of this form are global optima as long
as $r = \Omega(\sqrt{C})$ \citep{BVBNonconvex2018,CifBurer2021}. In
practice, to further improve scalability, one may start with a rank
$r \ll \sqrt{n}$ and gradually increase it when the primal--dual gap
fails to improve \citep{huang2024suboptimality}. Empirically, a small
rank such as $r \approx 10$ is often sufficient to reach the global optimum
on many problem instances. Even when the solution $\mY$ is not
optimal, the dual objective still provides a valid lower bound on
$\lambda_\mu$. For more computational examples along this line, where the
spectral program and its semidefinite relaxation are used to certify
lower bounds on $\mu$-conductance on large real-world graphs, we refer
the reader to \citet{hsg23}.

\paragraph{Computational Example}

On the other hand, here we provide one simple computational example to
understand how the bound of \Cref{thm:cheeger} behaves on graphs in
practice \footnote{Code for
  this example is available at
  \url{https://github.com/luotuoqingshan/mu_cond_cheeger}.}. Because
tracking exact $\mu$-conductance is NP-hard, we pick a small graph with
only 85 vertices such that we can exactly solve the integer program of
$\mu$-conductance using \texttt{Gurobi}. To make the graph easier to
visualize, we first generate the graph in a geometric way. We randomly
sample vertices in two-dimensional space and add edges from each
vertex to its nearest neighbors.  This graph contains a cut with
conductance $\approx 0.01$ for a small set of vertices $S$.  Then to
simulate the phenomenon of many real-world graphs that
$\mu$-conductance gets much larger when $\mu$ gets closer to $1/2$, we add
random edges into the graph in an Erd\H{o}s-R\'enyi way. In particular, we
add edges between each pair of vertices outside \(S\) with probability
\(0.2\), and edges incident to \(S\) with probability \(0.001\) to
preserve the low-conductance cut. This graph ends up having 568 edges,
with an edge density $0.159$. To compute the lower and upper bounds,
we then compute $\phi_{\mu}, \phi_0$ exactly using the integer program
\eqref{eq:mu-conductance-ip}. To track $\lambda_{\mu}$, we solve the
semidefinite program relaxation \eqref{eq:mu-conductance-sdp} using
the package \texttt{CSDP}. The results are summarized in
\Cref{fig:computational-example}.

\begin{figure}[t]
  \captionsetup[subfigure]{justification=centering}
  \centering
  \subfloat{\includegraphics[width=0.5\linewidth]{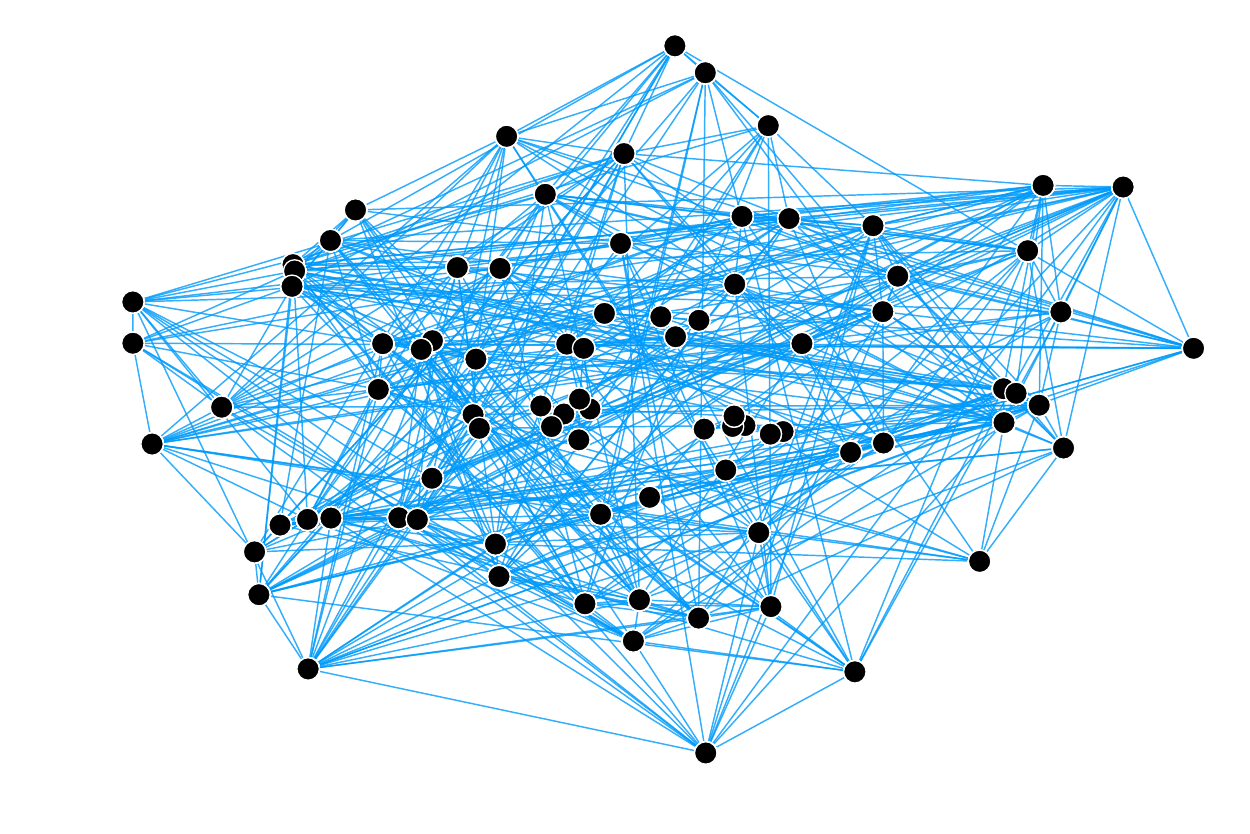}}%
  \hfill
  \subfloat{\includegraphics[width=0.5\linewidth]{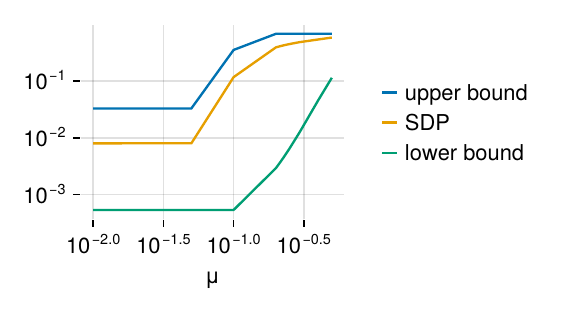}}
  \caption{An example graph with 85 vertices and 568 edges (Left) and the bounds from \Cref{lem:cheeger-upper-bound} computed on it for different $\mu$s (Right). To keep track of
    $\lambda_{\mu}$, the optimum of the spectral program, we compute its relaxation $\lambda_{\mu}^{\text{SDP}}$.
    The true $\lambda_{\mu}$ lies between the blue upper bound and the orange SDP relaxation. We see
  that the lower bound gets tighter in the large $\mu$ regime.}
  \label{fig:computational-example}
\end{figure}

\printbibliography

\end{document}